\documentclass[aip,jmp,nofootinbib]{revtex4-1}
\usepackage{graphics}
\usepackage{graphicx}
\usepackage{dcolumn}
\usepackage{bm}
\usepackage{mathrsfs}
\usepackage{pstricks}
\usepackage{color}
\usepackage{slashed}
\usepackage{amsmath}
\usepackage{epsfig}
\usepackage{amsfonts}
\usepackage{amssymb}
\usepackage{amsthm}
\newtheorem*{theorem}{Theorem}
\def\beq{\begin{equation}}
\def\eeq{\end{equation}}
\def\bea{\begin{eqnarray}}
\def\eea{\end{eqnarray}}

\def\diag{\textrm{diag}}
\def\Re{\textrm{Re}}

\def\E{\textrm{E}}

%
\hyphenation{ALPGEN}
\hyphenation{EVTGEN}
\hyphenation{PYTHIA}

\begin{document}

\title{Riemann zeta zeros and prime number spectra in quantum field theory}
\author{G. Menezes}
\email{gsm@ift.unesp.br}
\affiliation{Instituto de F\'{i}sica Te\'orica, Universidade Estadual Paulista, S\~ao Paulo, SP 01140-070, Brazil}

\author{B. F. Svaiter}
\email{benar@impa.br}
\affiliation{Instituto de Matem\'atica Pura e Aplicada, Rio de Janeiro, RJ 22460-320,
Brazil}
\author{N. F. Svaiter}
\email{nfuxsvai@cbpf.br}
\affiliation{Centro Brasileiro de Pesquisas F\'{\i}sicas, Rio de Janeiro, RJ 22290-180,
Brazil}

\begin{abstract}

The Riemann hypothesis states that all nontrivial zeros of the zeta function lie in the critical line $\Re(s)=1/2$. Hilbert and P\'olya suggested that one possible way to prove the Riemann hypothesis is to interpret the nontrivial zeros in the light of spectral theory. Following this approach, we discuss a necessary condition that such a sequence of numbers should obey in order to be associated with the spectrum of a linear differential operator of a system with countably infinite number of degrees of freedom described by quantum field theory. 
The sequence of nontrivial zeros is zeta regularizable. Then, functional integrals associated with hypothetical systems described by 
self-adjoint operators whose spectra is given by this sequence can be constructed. However, if one considers the same situation with primes numbers, the associated functional integral cannot be constructed, due to the fact that the sequence of prime numbers is not zeta regularizable. Finally, we extend this result to sequences whose asymptotic distributions are not ``far away" from the asymptotic distribution of prime numbers.

\end{abstract}


\pacs{02.10.De, 11.10.-z}

\maketitle

\section{Introduction}

The Riemann zeta function $\zeta(s)$ defined by analytic continuation of a Dirichlet series has a simple pole
with residue $1$ at $s=1$, trivial zeros at $s=-2n$, $n=1,2,...$ and infinitely many complex zeros $\rho=\beta+i\gamma$ for
$\beta,\gamma \in  \mathbb{R}$ and $0<\beta<1$.
The Riemann hypothesis is the conjecture that $\beta=1/2$~\cite{riem}. Another conjecture is that all zeros are simple.
Both conjectures are unsolved problems in mathematics and much effort has been expended using various
approaches to prove or disprove such conjectures. Hilbert and P\'olya suggested that one way to prove the Riemann
hypothesis is to give a spectral interpretation to nontrivial zeros of $\zeta(s)$. The nontrivial zeros could be the
eigenvalues of a linear self-adjoint operator in an appropriate Hilbert space. For a nice introduction to the
Riemann hypothesis, see the Refs.~\cite{borwein, rev}.

In recent years the behavior of solutions of the Hamilton
equations for different systems have been analyzed using methods
of nonlinear mechanics. Research in this direction started
with Gutzwiller \cite{gutzwiller1,gutzwiller2} who
presented a formula which enables one to calculate the spectral
density of chaotic systems, a trace formula that express the
density of states as a sum over the classical periodic orbits of the system.
Subsequent works of Bohigas, Giannoni, Schmit \cite{bo,bohigas} and also
Berry \cite{berry} raise the conjecture that a quantum energy spectrum of
classically chaotic systems may show universal spectral correlations
which are described by random matrix theory~\cite{mehta,for,for2}. Another important conjectured was stated by Montgomery~\cite{mo} and supported
by numerical evidences~\cite{mo2}: the distribution of
spacing between nontrivial zeros of the $\zeta(s)$ function is statistically identical to the distribution of
eigenvalue spacings in a Gaussian unitary ensemble. Such a conjecture prompted many authors to consider the
Riemann hypothesis in the light of random matrix theory and quantum mechanics of classically chaotic systems ~\cite{bu,buu,siam,bu2,bourgade}.

Since nonseparability in wave mechanics leads to chaotic systems in
the limit of short wavelengths, from the above discussion one cannot disregard the possibility that a quantum field model
described by a nonseparable wave equation for a given boundary condition is able to reproduce the statistical properties of the nontrivial zeta zeros. Other possibility that could shed some light in the spectral interpretation for the zeros is the study of field theory in disordered media, such as wave equations in random fluids or amorphous solids~\cite{ishimaru,ping,krein1,new} or also nonlinear dielectrics~\cite{annals}. For instance, it was shown that the level spacing distribution of disordered fermionic systems can be described using random matrix theory~\cite{efetov1,efetov2}. Therefore, one could naturally argue that further progress on the Hilbert-P\'olya approach can be achieved investigating the kind of systems discussed above. If one is willing to employ quantum field theory functional methods in order to describe a system with countably infinite number of degrees of freedom, there is a necessary condition that should be met: all path integrals must be finite. Here we use the zeta regularization procedure, a standard technique for regularizing quadratic path integrals.  We remark that systems with the spectra of the nontrivial zeros of the $\zeta(s)$ function have been discussed in the literature before. The statistical properties of a Fermi gas whose single-particle energy levels are given by these zeros  were investigated in~\cite{le,le2}. Also, in order to solve questions related to the number theory using statistical-mechanics methods, some authors introduced number theory and prime numbers in quantum field
 theory~\cite{stn,sss,bakas,julia,sp,spector}.

The aims of the present paper are the following. First we investigate if a quantum field theory model can describe some hypothetical physical system with infinite degrees of freedom, whose spectrum is given by the sequence of prime numbers.
We regularize the product of a sequence of numbers by means of the zeta regularized product associated with them. If these numbers can be associated with the spectrum of a linear differential operator, the zeta regularized product is the determinant of the linear operator~\cite{se,ray,dowker,ha,vo,vor,quine,dune}. Using the fact that the sequence of prime numbers is not zeta
regularizable~\cite{ga}, one cannot conceive a physical system with a pure prime number spectrum that could be described by quantum field theory. This result has been recently generalized by Andrade~\cite{andrade} for other number-theoretical sequence of numbers. Afterwards, prompted by this result we prove that sequences whose asymptotic distribution is close to that of the prime numbers are also not zeta regularized. Therefore, these sequences cannot be associated with the spectrum of a self-adjoint operator in quantum field theory. Then we show that the sequence of the nontrivial zeros of the Riemann zeta function can in principle be interpreted as being the spectrum of a self-adjoint operator of some hypothetical system with countably infinite number of degrees of freedom described by quantum field theory. Finally, we discuss possible relations between the asymptotic behavior of a sequence and the analytic domain of the associated zeta function.

The organization of this paper is the following. In Section II we discuss briefly the Riemann zeta function. In Section III, using the fact that the sequence of primes numbers is not zeta regularizable, we conclude that there cannot be a quantum field theory which is able to describe systems with such a spectrum. We also prove that sequences whose asymptotic distributions are not ``too far away" from the asymptotic distribution of the prime numbers are also not zeta regularized (the precise meaning of being``not too far away" will be given in due course). In Section IV, using the fact that the sequence of nontrivial zeros of the Riemann zeta
function is zeta regularizable we propose that quantum field theory can be used to describe a hypothetical system associated with the Hilbert-P\'olya conjecture. In section V we discuss possible relations between the asymptotic behavior of a sequence and the analytic domain of the associated zeta function. Conclusions are given in Section VI. In the paper we use $k_{B}=c=\hbar=1$.

\section{The Riemann zeta function and Hilbert-P\'olya conjecture.}

Prime numbers occur in a very irregular way within
the sequences of all integers in local scales (local scales means on intervals comparable to their mean spacing).
On the other hand, on large scales they are very regular.
The best result that we have concerning the global distribution is the prime number theorem: if $\pi(x)$ is the
number of primes less than or equal to $x$, then $x^{-1}\pi(x)\ln(x)\rightarrow 1$ as $x\rightarrow\infty$~\cite{hadamard,c1,c2}.

Riemann showed how the distribution of the prime numbers is determined by the nontrivial complex zeros of the zeta function. A explicit formula with sums involving the prime numbers and other sums
involving the zeros of the zeta function was presented \cite{ingham,titchmarsh,pat,france}.
We start discussing how the product of all primes appears in a representation of the
Riemann zeta function $\zeta(s)$.

Let $s$ be a complex variable i.e. $s=\sigma+i\tau$ with $\sigma,\tau \in \mathbb{R}$. For $\Re(s)>1$ the Dirichlet series
\begin{equation}
\sum_{n=1}^{\infty}\,\frac{1}{n^{s}}
\label{p2}
\end{equation}
converges absolutely, and uniformly for $\Re(s)\,\geq\,1+\delta$, for all $\delta\,>\,0$. It is possible to show that
\begin{equation}
\zeta(s)=\prod_{p}\,\Biggl(\frac{1}{1-p^{-s}}\Biggr),
\label{p4}
\end{equation}
for $p\,\in \cal{P}$ where $\cal{P}$ is the set of all prime numbers. The product giving by Eq.~(\ref{p4}) is called the Euler product. This is an analytic form of the fundamental theorem of arithmetic, since primes are the multiplicative building block for the natural numbers. Eqs.~(\ref{p2}) and~(\ref{p4}) connect the additive structure in order to generate successive positive integers to this
multiplicative structure. From the convergence of Eq. (\ref{p4}) we obtain that $\zeta(s)$ has no zeros for $\Re(s)>1$.

The Riemann zeta function $\zeta(s)$ is the analytic continuation of the Dirichlet series defined by Eq.~(\ref{p2}) to the whole complex plane. Its unique singularity is the point $s=1$ at which it has a simple pole with residue $1$. Moreover it satisfies the functional equation
\begin{equation}
\pi^{-\frac{s}{2}}\Gamma\biggl(\frac{s}{2}\biggr)\zeta(s)=\pi^{-\frac{(1-s)}{2}}\Gamma\biggl(\frac{1-s}{2}\biggr)\zeta(1-s),
\label{f}
\end{equation}
for $s \in \mathbb{C}\setminus\left\{0,1\right\}$. Let us define the entire function $\xi(s)$ as
\begin{equation}
\xi(s)=\frac{1}{2}s(s-1)\pi^{-\frac{s}{2}}\Gamma\biggl(\frac{s}{2}\biggr)\zeta(s).
\end{equation}
Using the function $\xi(s)$, the functional equation given by Eq.~(\ref{f}) takes the form $\xi(s)=\xi(1-s)$. If $\rho$ is a zero of $\xi(s)$, then by the functional equation so is $1-\rho$. Since $\bar{\xi}(\rho)=\xi(\bar{\rho})$ we have that $\bar{\rho}$ and $1-\bar{\rho}$ are also zeros. The zeros are symmetrically arranged about the real axis and also about the critical line.
Let us write the complex zeros of the zeta function as $\rho=\frac{1}{2}+i\gamma$, $\gamma \in  \mathbb{C}$. The Riemann hypothesis is the statement that all $\gamma$ are real. A weak form of the Riemann hypothesis, namely that there are no zeros of zeta function on the line $\left\{s:\Re(s)=1\right\}$, implies the prime number theorem. We have $\zeta(1+i\tau)\neq 0$ for all $\tau \in \mathbb{R}$.

\section{Asymptotic distributions and zeta regularization}

Some authors~\cite{mussardo,rosu,zyl,pre} formulated the following question: is there a quantum mechanical potential related to the prime numbers? Quantum field theory is the formalism where the probabilistic interpretation of quantum mechanics and the special theory of relativity were gathered to take into account a plethora of phenomena not described by classical physics and quantum mechanics of systems with finite degree of freedom. Therefore it is natural to inquire whether a quantum field theory model can be used to describe some hypothetical physical system with infinite degrees of freedom, whose spectrum is given by the sequence of prime numbers. In the following we will present results that put strong restrictions on the existence of quantum field theory models with a pure prime numbers spectrum. Let us examine this problem in more detail.

Let us discuss the analytic continuation of the prime zeta function and how close of $\pi(x)$ an asymptotic distribution shall be
so that the associated spectral zeta function has the same singular structure as the prime zeta function $P(s)$.

Let $\left\{a_{n}\right\}_{n \in \mathbb{N}}$ be a sequence of nonzero complex numbers. The zeta regularization product is defined as
\begin{equation}
\prod_{n \in \mathbb{N}}\,a_{n}=:\exp\biggl(-\frac{d}{ds}\zeta_{a}(s)|_{s=0}\biggr),
\end{equation}
provided that the spectral zeta function $\sum_{n \in \mathbb{N}}a_{n}^{-s}$ has an analytic extension and is holomorphic at $s=0$. Note that the definition of a zeta regularized product associated with a sequence of complex numbers depends on the choice of $\arg a_{n}$. Therefore some care must be taken to deal with this problem. However, in this paper we assume a sequence of nonzero real numbers, so such a problem is absent.

The prime zeta function $P(s)$, $s=\sigma+i\tau $, for $\sigma,\tau \in \mathbb{R}$, is defined as
\begin{equation}
P(s)=\sum_{\left\{p\right\}}\,p^{-s},  \,\,\Re(s)>1,
\label{21}
\end{equation}
where the summation is performed over all prime numbers~\cite{lan,carl}. We are using again the notation $p\,\in \cal{P}$ where $\cal{P}$ is the set of all prime numbers. The series defined by Eq.~(\ref{21}) converges absolutely when $\sigma>1$.

It is clear that if $P(s)$ is the spectral zeta function of some operator, for our purposes
we have to study the analytic extension of the prime zeta function. Using the Euler formula given by Eq.~(\ref{p4}) we have
\begin{equation}
\ln \zeta(s)=
\sum_{\left\{p\right\}}\sum_{r=1}^{\infty}\frac{1}{r}\,p^{-rs}, \,\,\,\Re(s)>1.
\label{22}
\end{equation}
Using the definition of the prime zeta function we have
\begin{equation}
\ln \zeta(s)=\sum_{r=1}^{\infty}\frac{1}{r}\,P(rs),  \,\,\,\Re(s)>1.
\label{23}
\end{equation}
Introducing the M\"obius function $\mu(n)$ ~\cite{hardy},
%
%
it is possible to show that the prime zeta function $P(s)$ can be expressed as
\begin{equation}
P(s)=\sum_{k=1}^{\infty}\frac{\mu(k)}{k}\,\ln\zeta(ks), \,\,\,\Re(s)>1.
\label{24}
\end{equation}
The analytic continuation of the prime zeta function can be obtained only using the Riemann analytic
continuation of the Riemann zeta function.
Defining the function $\phi(x)$ as
\begin{equation}
\phi(x)= \sum_{n=1}^{\infty}\,e^{-n^{2}\,\pi\,x},
\label{z2}
\end{equation}
we can write
\beq
\Gamma\biggl(\frac{s}{2}\biggr)\,\,\pi^{-\frac{s}{2}}\,\zeta(s)=\frac{1}{s(s-1)}+\int_{1}^{\infty}dx\,\phi(x)\biggl
(x^{\frac{s}{2}-1}+x^{-\frac{1}{2}(s+1)}\Biggr).
\label{z7}
\eeq
The integral that appears in the Eq. (\ref{z7}) is convergent for
all values of $s$ and therefore the Eq. (\ref{z7}) gives the
analytic continuation of the Riemann zeta function to the whole
complex $s$-plane. The only singularity is the pole at $s=1$.
Consequently, for the prime zeta function $P(s)$, $s=1/k$
is a singular point for all square free positive integers $k$. This sequence limits to $s=0$. All
points on the line $\Re(s)=0$ are limit points of the singularities of $P(s)$, so that the line $\Re(s)=0$ is a
natural boundary of $P(s)$. The prime zeta function can be analytically extended only in the strip
$0<\sigma\leq 1$. This result, that the prime zeta function cannot be
continued beyond the line $\sigma=0$ was obtained by Landau and Walfisz~\cite{lan} and discussed by Fr\"oberg~\cite{carl}. 
Since we need to compute the derivative of the spectral prime zeta function at $s=0$, the assumption of a prime number spectrum implies in an ill-defined functional integral.

It is natural to enquire which other sequences, besides the prime numbers, cannot be zeta regularized and therefore cannot be associated with the spectrum of self-adjoint operators in quantum field theory. In a recent paper Andrade~\cite{andrade} proved that there are other number-theoretical sequences of numbers that cannot be the spectrum of an operator describing a bosonic system with infinite degrees of freedom described by quantum field theory. Here we extend this result to sequences whose asymptotic distributions are not ``too far away" from the asymptotic distribution of the prime numbers.

\begin{theorem}
Suppose that $Q=\{q_{1}\leq q_{2}\leq \cdots\}\subset (0,+\infty)$, $q_{n}\rightarrow\infty$ as $n\rightarrow\infty$, and define
\begin{equation}
\pi_{Q}(x)=\#\{n\,|\,q_{n}\leq x\}.
\end{equation}
If $|\pi(x)-\pi_{Q}(x)|=O((\ln x)^{k})$ for $x$ large enough and $k>0$, then $\zeta_{Q}(s)=\sum_{n}\frac{1}{q_{n}^{s}}$ has the same singularities as the prime zeta function and therefore has no analytic extension at $s=0$.
\end{theorem}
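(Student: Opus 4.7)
The plan is to reduce the singularity structure of $\zeta_{Q}(s)$ to that of the prime zeta function $P(s)$ by showing that the difference $\zeta_{Q}(s) - P(s)$ extends holomorphically to the whole half-plane $\Re(s)>0$. Since $P(s)$ has singularities at $s=1/k$ for every square-free positive integer $k$, and these points accumulate at $s=0$, the difference being regular will force $\zeta_{Q}(s)$ to inherit the same singularities; in particular $\zeta_{Q}(s)$ cannot be holomorphic at $s=0$.

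First, for $\Re(s)>1$ I would represent both series as Stieltjes integrals against their counting functions,
\[
P(s) = \int_{x_{0}}^{\infty} x^{-s}\, d\pi(x), \qquad \zeta_{Q}(s) = \int_{x_{0}}^{\infty} x^{-s}\, d\pi_{Q}(x),
\]
with $x_{0}$ chosen large enough that the bound $|\pi(x)-\pi_{Q}(x)|=O((\ln x)^{k})$ holds for $x\geq x_{0}$; the finitely many $q_{n}<x_{0}$ and primes $p<x_{0}$ contribute a manifestly entire remainder which is set aside. An Abel summation (integration by parts) then yields
\[
\zeta_{Q}(s) - P(s) = s \int_{x_{0}}^{\infty} \frac{\pi_{Q}(x) - \pi(x)}{x^{s+1}}\, dx + B(s),
\]
where $B(s)$ collects the boundary contribution at $x_{0}$ together with the initial corrections and is entire in $s$.

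Second, by hypothesis the integrand above is dominated by $(\ln x)^{k}/x^{\Re(s)+1}$, which is integrable on $[x_{0},\infty)$ for every $s$ with $\Re(s)>0$. A standard Morera or uniform-convergence argument then shows that the integral represents a function holomorphic in $s$ on the open half-plane $\Re(s)>0$; multiplication by the entire prefactor $s$ preserves holomorphicity, so $\zeta_{Q}(s)-P(s)$ admits a holomorphic extension to all of $\Re(s)>0$. Combining this with the expansion $P(s)=\sum_{k\geq 1}(\mu(k)/k)\,\ln\zeta(ks)$ recalled in the excerpt, $P(s)$ has logarithmic singularities precisely at $s=1/k$ for $k$ square-free, and these cluster at $s=0$; because the difference is holomorphic at every $1/k$, $\zeta_{Q}(s)$ must exhibit the same singularity at each such point. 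An infinite family of singularities accumulating at $s=0$ obstructs any holomorphic extension of $\zeta_{Q}(s)$ to a neighbourhood of $s=0$, and in fact makes $\Re(s)=0$ a natural boundary, matching $P(s)$.

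The main obstacle I anticipate is not analytical depth but careful bookkeeping in the Abel summation: one has to handle the possible initial irregularities of $\pi_{Q}$ on $[0,x_{0})$, the finite-support boundary terms, and the interchange of limit and integration in a way that does not spoil holomorphy on $\Re(s)>0$. A secondary subtlety worth stressing is that the ``same singularities'' conclusion uses in an essential way that the points $1/k$ accumulate at $0$: a holomorphic perturbation in $\Re(s)>0$ could cancel any \emph{finite} collection of isolated singularities, but it cannot cancel an infinite family converging to an interior point of its domain of holomorphy, which is precisely what rules out any analytic continuation of $\zeta_{Q}(s)$ across $s=0$.
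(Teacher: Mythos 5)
Your proposal is correct and follows essentially the same route as the paper: represent $P(s)$ and $\zeta_{Q}(s)$ as Riemann--Stieltjes integrals, integrate by parts to express the difference as $s\int \frac{\pi(t)-\pi_{Q}(t)}{t^{s+1}}\,dt$ plus harmless terms, and use the hypothesis $|\pi(x)-\pi_{Q}(x)|=O((\ln x)^{k})$ together with dominated convergence to show this difference is holomorphic for $\Re(s)>0$. Your added care with the boundary terms and your explicit final step---that the singularities of $P(s)$ at the points $1/k$ accumulate at $s=0$ and hence cannot be cancelled by a holomorphic perturbation---only make the argument more complete than the paper's version.
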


\begin{proof}
In view of the assumptions
\begin{equation}
\zeta_{Q}(s)=\sum_{n}\frac{1}{q_{n}^{s}}, \,\,\,\,\text{for\,}Re(s)>1,
\end{equation}
and this function is analytic for $Re(s)>1$. In this region we can use the following representations for $\zeta_{Q}(s)$ and the prime zeta function $P(s)$, respectively,
\begin{equation}\label{aa}
\zeta_{Q}(s)=\int_{a}^{\infty}\frac{1}{t^{s}}d\big(\pi_{Q}(t)\big)
\end{equation}
and
\begin{equation}\label{aaa}
P(s)=\int_{a}^{\infty}\frac{1}{t^{s}}d\big(\pi(t)\big),
\end{equation}
where $0<a<\min\{2,q_{1}\}$ and the integrals are Riemann-Stieltjes integrals. Define for $Re(s)>1$,  $G(s)$ as
\begin{equation}
G(s)=P(s)-\zeta_{Q}(s).
\end{equation}
This function is analytic for $Re(s)>1$. Moreover, in view of Eqs.~(\ref{aa}) and~(\ref{aaa}) in this region we can write
\begin{equation}
G(s)=\lim_{R\to\infty}\int_{a}^{R}\frac{1}{t^{s}}d\big(\pi(t)-\pi_{Q}(t)\big).
\end{equation}
Since $\pi(s)$ and $\pi_{Q}(s)$ have bounded variations in finite intervals, integrating by parts we have
\begin{equation}
G(s)=\lim_{R\to\infty}\left[\frac{\pi(t)-\pi_{Q}(t)}{t^{s}}\right]_{a}^{R}+s\int_{a}^{R}\frac{\pi(t)-\pi_{Q}(t)}{t^{s+1}}dt.
\end{equation}
The first limit vanishes and we conclude that
\begin{equation}
G(s)=s\int_{a}^{\infty}\frac{\pi(t)-\pi_{Q}(t)}{t^{s+1}}dt,
\end{equation}
for $Re(s)>1$.
In view of the assumptions, the above expression for $G(s)$ is well defined for $Re(s)>0$. Using Lebesgue's dominated convergence theorem it is trivial to verify that $G(s)$ is differentiable for $Re(s)>0$. All together we have $\zeta_{Q}(s)=P(s)-G(s)$ for $Re(s)>0$,  with $G(s)$ analytic in this region. This concludes the proof.

\end{proof}

The main point of Riemann's paper is that the two sequences, of prime numbers on one hand and the
Riemann zeros on the other hand are connected. The erratic behavior of the primes is encoded in the nontrivial zeros
of the zeta function. Although we also expect this erratic behavior for the zeros, these two sequence of numbers have totally
distinct behavior with respect to being the spectrum of a linear differential operator of a system with infinite degrees of freedom.

In the next Section, we discuss if the sequence of nontrivial zeros of the Riemann zeta function can be associated with the spectrum of a 
linear differential operator of a system with countably infinite number of degrees of freedom described by quantum field theory.

\section{The Super-Zeta or Secondary Zeta Function and its Analytic Extension}

Let $(M,g)$ be a compact, Riemaniann $C^{\infty}$ manifold, with metric $g=(g_{ij})$ on $M$. We assume that $M$ is connected and dim$\,M=d$. Let $D$ be a generic elliptic operator acting on a neutral scalar field $\varphi$, both defined in $M$. We assume that the behavior of the fields at infinity is such that the compactification is possible. In order to obtain the correlation functions of the theory one should construct the generating functional $Z[h]$. Defining an appropriate kernel $K(m_{0};\,x-y)$, the generating functional $Z[h]$ is formally defined by the following functional integral:
\begin{equation}
Z[h]=\int [d\varphi]\,\, \exp\left(-S_{0} - \int d^{d}x\sqrt{g(x)}\,
h(x)\varphi(x)\right),
\label{1}
\end{equation}
where $g=\det(g_{ij})$ and the action that usually describes a scalar field is
\begin{equation}
S_{0}(\varphi)=\int d^{d}x\,d^{d}y\,\sqrt{g(x)}\sqrt{g(y)}\varphi(x)K(m_{0};\,x-y)\varphi(y).
\label{2}
\end{equation}
In Eq.~(\ref{1}), $[d\varphi]$ is a appropriate measure, formally given by $[d\varphi]=\prod_{x} d\varphi(x)$. The term $m_{0}^{2}$ is the (bare) 
mass squared of the model. Finally, $h(x)$ is a smooth function introduced to generate the Schwinger functions of the theory. One can define the functional $W[h]=\ln Z[h]$ which
generates the connected Schwinger functions.
In order to obtain a well-defined object, we need to regularize a determinant associated with the operator $D$, since $W[0]= - 1/2 \ln\det D$. 
A similar situation arises when one calculates the one-loop effective action for non-Gaussian functional integrals.

In the following we use the Minakshisundaram-Pleijel zeta function~\cite{mp}.
The standard technique of the spectral theory of elliptic operators implies the existence of a complete
orthonormal set $\left\{f_{k}\right\}_{k=1}^{\infty}$ such that the eigenvalues satisfy: $0\leq \lambda_{1}\leq\lambda_{2}\leq
\,...\,\leq\lambda_{k}\rightarrow\infty$, when $k\rightarrow\infty$ where the zero eigenvalue must be omitted
(eigenvalues being counted with their multiplicities).
In the basis $\left\{f_{k}\right\}$ the operator $D$ is represented by an infinite diagonal matrix $D=\diag\,(\lambda_{1}, \lambda_{2},...)$.
Therefore the generic operator $D$ satisfies $D f_{n}(x)=\lambda_{n}f_{n}(x)$. The spectral zeta function associated with the operator $D$ is defined as
\begin{equation}
\zeta_{D}(s)=\sum_{n}\frac{1}{\lambda_{n}^{s}},\,\,\,\,\,\,\,\Re(s)>s_{0},
\label{imp}
\end{equation}
for some $s_0$. Formally we have
\begin{equation}
-\frac{d}{ds}\zeta_{D}(s)|_{s=0}=\ln\det D.
\label{imp2}
\end{equation}

In order to regularize the determinant or the functional integral it is necessary to perform an analytic continuation
of the spectral zeta function from some half-plane, i.e., for sufficient large positive $\Re(s)$ into the whole
complex plane. The spectral zeta function must be analytic in a complex neighborhood of the origin, i.e., $s=0$.
This method can also be used non-Gaussian functional integrals when one calculates the one-loop effective action. In addition,
we need to use scaling properties, i.e.,
\begin{equation}
\frac{d}{ds}\zeta_{\mu^{2}D}(s)|_{s=0}=\ln\,\mu^{2}\zeta_{D}|_{s=0}+\frac{d}{ds}\zeta_{D}|_{s=0}.
\label{17}
\end{equation}

Let us discuss the construction of the so-called super-zeta or secondary zeta function built over the Riemann zeros, i.e., 
the nontrivial zeros of the Riemann zeta function~\cite{gui,delsarte,ivic,voros1}. In the following, for completeness, we present the analytic extension for $G_{\gamma}(s)$, defined by
\begin{equation}
G_{\gamma}(s)=\sum_{\gamma > 0}\gamma^{-s}, \,\,\,\Re(s)>1,
\label{zetazero}
\end{equation}
assuming the Riemann hypothesis. We are following the Ref.~\cite{delsarte}. Using the definition given by Eq.~(\ref{zetazero}) we get
\begin{equation}
\Gamma\biggl(\frac{s}{2}\biggr)\pi^{-\frac{s}{2}}G_{\gamma}(s)=\int_{0}^{\infty}\,dx\,
x^{\frac{s}{2}-1}\sum_{\gamma>0}\,e^{-\pi\gamma^{2}x}.
\label{zetazero2}
\end{equation}
Let us split the integral that appears in Eq.~(\ref{zetazero2}) in the intervals $[0,1]$ and $[1,\infty)$, and define the functions
\begin{equation}
A(s)=\int_{0}^{1}\,dx\, x^{\frac{s}{2}-1}\sum_{\gamma>0}\,e^{-\pi\gamma^{2}x}
\label{zetazero3}
\end{equation}
and
\begin{equation}
B(s)=\int_{1}^{\infty}\,dx\, x^{\frac{s}{2}-1}\sum_{\gamma>0}\,e^{-\pi\gamma^{2}x}.
\label{zetazero4}
\end{equation}
Note that $B(s)$ is an entire function. To proceed let us use that
\beq
\sum_{\gamma>0}\,e^{-\pi\gamma^{2}x} = -\frac{1}{2\pi\sqrt{x}}\sum_{n=2}^{\infty}
\frac{\Lambda(n)}{\sqrt{n}}\,e^{-\frac{(\ln n)^{2}}{4\pi x}}+\,e^{\frac{\pi x}{4}}-\frac{1}{2\pi}\int_{0}^{\infty}dt\,e^{-\pi x t^{2}}\Psi(t),
\label{zetazero55}
\eeq
where the function $\Psi(t)$ is given by
\begin{equation}
\Psi(t)=\frac{\zeta'(\frac{1}{2}+i t)}{\zeta(\frac{1}{2}+i t)}+\frac{\zeta'(\frac{1}{2}-i t)}{\zeta(\frac{1}{2}-i t)}.
\label{zetazero5}
\end{equation}
Substituting Eq.~(\ref{zetazero55}) in~(\ref{zetazero3}) we get that $A$-function can be written as
\begin{equation}
A(t)=A_{1}(t)+A_{2}(t)+A_{3}(t),
\label{zetazero6}
\end{equation}
where
\begin{equation}
A_{1}(s)=-\frac{1}{2\pi}\int_{0}^{1}\,dx\,x^{\frac{s}{2}-\frac{3}{2}}\biggl(\sum_{n=2}^{\infty}
\frac{\Lambda(n)}{\sqrt{n}}\,e^{-\frac{(\ln n)^{2}}{4\pi x}}\biggr),
\label{zetazero7}
\end{equation}
\begin{equation}
A_{2}(s)=\int_{0}^{1}\,dx\,x^{\frac{s}{2}-1}\,e^{\frac{\pi x}{4}}
\label{zetazero8}
\end{equation}
and finally
\begin{equation}
A_{3}(s)=-\frac{1}{2\pi}\int_{0}^{1}\,dx\,x^{\frac{s}{2}-1}\,\biggl(\int_{0}^{\infty}\,e^{-\pi x t^{2}}\Psi(t)\biggr).
\label{zetazero9}
\end{equation}
Changing variables in the $A_{1}(s)$, i.e., $x\rightarrow1/x$ we get
\begin{equation}
A_{1}(s)=-\frac{1}{2\pi}\int_{1}^{\infty}\,dx\,x^{-\frac{s}{2}-\frac{1}{2}}
\biggl(\sum_{n=2}^{\infty}\frac{\Lambda(n)}{\sqrt{n}}\,e^{-\frac{x(\ln n)^{2}}{4\pi}}\biggr).
\label{zetazero10}
\end{equation}
It is clear that $A_{1}(s)$ is an entire function of $s$. Let us define $\Phi(s)$ as
\begin{equation}
\Phi(s)=A_{1}(s)+B(s).
\label{zetazero11}
\end{equation}
Using Eqs.~(\ref{zetazero4}),~(\ref{zetazero6}),~(\ref{zetazero8}),~(\ref{zetazero9}) and~(\ref{zetazero11}) we can write expression~(\ref{zetazero2}) as
\begin{equation}
\Gamma\biggl(\frac{s}{2}\biggr)\pi^{-\frac{s}{2}}G_{\gamma}(s)=\Phi(s)+A_{2}(t)+A_{3}(t).
\label{zetazero12}
\end{equation}
Since $\Phi(s)$ is an entire function and we have the integrals that define $A_{2}(t)$ and $A_{3}(t)$, the above formula is
the analytic extension of the secondary zeta function. The function
$G_{\gamma}(s)$ is a meromorphic function of $s$ in the whole complex plane with double pole at
$s=1$ and simple poles at $s=-1,-2,..,-(2n+1),..\,$. Therefore $(s-1)^{2}G_{\gamma}(s)(\Gamma(s))^{-1}$ is an entire function.
From the above discussion we have that the spectral determinant associated with the zeta zeros is well defined, since it is
possible to find an analytic continuation of $G_{\gamma}(s)$ to a meromorphic function in the whole
complex $s$-plane and also analytic in a complex neighborhood of the origin, i.e., $s=0$. A more detailed development of this interesting viewpoint can be found in~\cite{superzeta}.

We conclude that there is a large class of hypothetical systems with countably infinite number of degrees of freedom, described by
self-adjoint operators whose spectra can be given by the sequence of the nontrivial zeros of the Riemann zeta function.

The question that confront us is the following. Is it possible to extend the last result to sequences whose asymptotic distributions
are not ``far way" from the asymptotic distribution of the nontrivial zeta zeros? Now we discuss possible relations between the asymptotic behavior of a numerical sequence and the analytic domain of the associated zeta function. This is the topic of the next Section.

\section{The approximations for the super-zeta function and the prime zeta function using asymptotic distributions}

The aim of this section is to propose an approximation for a zeta function of a numerical sequence by means of its
asymptotic distribution. For the case of the nontrivial zeros of Riemann zeta function, the analytic domain and polar
structure of the approximated zeta function is the same as the super-zeta function for $\Re(s)> -\frac{3}{2}$.
However, for the sequence of prime numbers, this approximation fails to reproduce the analytic structure of the prime zeta function.

Let $\lambda_1<\lambda_2<\dots$ be the spectrum of a linear operator $D$.
We discussed before a necessary condition for a quantum field theory model to reasonably describe a physical system possessing such a spectrum: the spectral zeta-function associated with such a sequence shall have an analytic continuation which includes the origin.
What properties of the sequence $(\lambda_n)$ determine the existence of this analytic continuation?

The spectral zeta function associated with the sequence $\lambda_n$ can be expressed as
\beq
\zeta_{D}(s) = \sum_{n=1}^{\infty}\,\lambda_n^{-s} = \lim_{m\to\infty}\sum_{n=1}^{m}\,\lambda_n^{-s}.
\label{zeta}
\eeq
In order to define the above zeta function by means of an integral, let us introduce the following counting function
$$
F(t) = \#\{\lambda_n\,|\,\lambda_n < t\}
$$
that is, $F(t)$ is the number of elements in the sequence $\lambda_n$ which are less than $t$.
The direct use of the definition of the Riemann-Stieltjes integral yields
\beq
\sum_{n=1}^{m}\,\lambda_n^{-s} = \sum_{n=1}^{k-1}\,\lambda_n^{-s} + \int_{a}^{b}\,t^{-s}\,dF(t),\,\,\,\lambda_{k-1}\leq a < \lambda_k\,,\lambda_{m}\leq b < \lambda_{m+1}.
\eeq
Therefore, the spectral zeta function $\zeta_D(s)$ can be expressed as
\beq
\zeta_{D}(s) = \sum_{n=1}^{k-1}\,\lambda_n^{-s} + \int_{\lambda_k}^{\infty}\,t^{-s}\,dF(t).
\label{zeta:RS}
\eeq
Such a formula is valid in the region of convergence of the series given by Eq.~(\ref{zeta}). Since the finite sum in the
right-hand side of Eq.~(\ref{zeta:RS}) is analytic over the whole complex $s$-plane, the qualitative behavior of
the analytic extension of the above function is determined by the above Riemann-Stieltjes integral.

Since it is a hard task to compute $F(t)$ in general, one may resort to asymptotic expansions, when they are
available. For the case where the sequence $\lambda_n$ is given by the sequence of nontrivial zeros of the Riemann zeta function, we have
\beq
F(t) \approx t\ln t\,,t\to\infty.
\eeq
Using this approximation to $F(t)$ we get, for the spectral zeta function of the nontrivial Riemann zeros
\beq
\tilde{\zeta}_{D}(s) = \sum_{n=1}^{k-1}\,\lambda_n^{-s} + \int_{\lambda_k}^{\infty}\,t^{-s}\,d(t \ln t).
\eeq
Since the integrator $t\ln t$ is smooth over the interval $(0,\infty)$, the Riemann-Stieltjes integral
coincides with the usual Riemann integral and we have
$$
\int_{\lambda_k}^{\infty}\,t^{-s}\,d(t \ln t) =\int_{\lambda_k}^{\infty}\,t^{-s}\,\frac{d}{dt}(t\ln t)\,dt.
$$
Thus, one gets for $\Re(s) > 1$
\beq
\tilde{\zeta}_{D}(s) = \sum_{n=1}^{k-1}\,\lambda_n^{-s} + \frac{\lambda_{k}^{-(s-1)}}{(s-1)^2} +
\frac{\lambda_{k}^{-(s-1)}}{(s-1)}\left(1 + \ln\lambda_{k}\right).
\eeq
Analytic continuation of the above result gives the same pole structure of the super-zeta $G_{\gamma}(s)$ in the neighborhood of $\Re(s) \geq 0$.

Now let us apply the same reasoning when the sequence $\lambda_n$ is given by the sequence of primes. In this case,
the integrator $F(t)$ is just the prime counting function $\pi(t)$. Asymptotically we have
\beq
F(t) = \pi(t) \approx \frac{t}{\ln t}\,,t\to\infty.
\eeq
Using this approximation, the spectral zeta function for the sequence of prime numbers is given by
\beq
\hat{\zeta}_{D}(s) = \sum_{n=1}^{k-1}\,\lambda_n^{-s} + \int_{\lambda_k}^{\infty}\,t^{-s}\,d\left(\frac{t}{\ln t}\right).
\eeq
The integrator is smooth over the interval $(\lambda_k,\infty)$, $\lambda_k > 1$, then the again Riemann-Stieltjes
integral coincides with the usual Riemann integral. The final result is
\beq
\hat{\zeta}_{D}(s) = \sum_{n=1}^{k-1}\,\lambda_n^{-s} + \E_{1}[(s-1)\ln\lambda_k]
- \frac{1}{\ln\lambda_k}\E_{2}[(s-1)\ln\lambda_k],
\eeq
where $\E_{m}(z) $ is the usual exponential integral. Using the recurrence relation
\beq
E_{n+1}(z)=\frac{1}{n}[e^{-z}-z\,E_{n}(z)]\,\,\, n=1,2,3..
\eeq
we have
\begin{equation}
\hat{\zeta}_{D}(s)= \sum_{n=1}^{k-1}\,\lambda_n^{-s}+s\,E_{1}[(s-1)\ln \lambda_{k}]-\frac{1}{\lambda_{k}^{s-1}\,\ln\,\lambda_{k}}.
\end{equation}

Using the analytic continuation of such a function in the above result yields multi-valued functions with a branch point at $s=1$. Comparison with the prime zeta function $P(s)$ defined by Eq.~(\ref{21}) reveals that $\hat{\zeta}_{D}(s)$ fails to reproduce qualitatively the same singular structure of $P(s)$ in the neighborhood of $\Re(s) \geq 0$.

\section{Conclusions}

The analytic function which encodes information on the prime factorization of integers and distribution of primes is the Riemann zeta function $\zeta(s)$ because in the region of the complex plane where it converges absolutely and uniformly, the
product of all prime numbers appears in a representation of the Riemann zeta function $\zeta(s)$. The Riemann hypothesis claims that all nontrivial zeros of the Riemann zeta function $\zeta(s)$ lie on the critical line $\Re(s)=1/2$. This hypothesis
makes a deep connection between primes numbers and zeros of analytic functions. Hilbert and P\'olya suggested that there might be a spectral interpretation of the the non-trivial zeros of the Riemann zeta function. The corresponding operator must be self-adjoint.

Non-separability in wave mechanics leads us to chaotic systems in the limit of short wavelengths. Therefore one cannot disregard the possibility that a quantum field model described by a nonseparable wave equation for a given boundary condition is able to reproduce the statistical properties of the nontrivial zeta zeros. Other possibility is to consider a field theory in disordered medium, such as wave equations in amorphous solids. We conjecture that a system with countably infinite number of degrees of freedom with randomness can be used to achieve further progress in the Hilbert-P\'olya conjecture.

In this paper we proved that for systems described by a self-adjoint operator where the spectrum is given by the prime numbers, the associated functional integral cannot be constructed. The impossibility of extending the definition of the analytic function $P(s)$ to the half-plane $\sigma<0$ means that the functional determinant cannot be defined, in virtue that the prime numbers sequence is not zeta regularizable. Second, prompted by this result, we prove that sequences whose asymptotic distribution is close to that of the prime numbers are also not zeta regularized. Therefore, these sequences cannot be
associated with the spectrum of a self-adjoint operator in quantum field theory. Next, we obtained that in principle hypothetical physical systems described by self-adjoint operators with a spectrum given by the sequence of nontrivial zeros of the Riemann zeta function can be described by a quantum field theory model in the functional integral approach. We discuss possible relations between the asymptotic behavior of a sequence and the analytic domain of the associated zeta function.

Finally, more can be said. Suppose that the generic operator $D$ is the usual Laplacian defined in a Euclidean manifold.
If one compares the Weyl law for the asymptotic series for the density of eigenvalues
of the Laplacian operator in a three-space, $N(\omega)= V\omega^{2}/2\pi^{2}+..$, where $V$ is the volume of the three-space~\cite{cou,can} with the asymptotic distribution of the zeta zeros or the asymptotic distribution of prime numbers that
follows from the prime number theorem, we get quite different regimes.
For usual quantized systems we get that $\omega_{n}$ must increase at a rate dictated by the Weyl law.
For the prime numbers, using the prime number theorem, we get the asymptotic regimes $p_{n}\sim\,n\ln n$.
If the zeros $\rho=\beta+i\gamma$, with $\gamma>0$ are arranged in a sequence $\rho_{k}=\beta_{k}+i\gamma_{k}$ so that
$\gamma_{k+1}>\gamma_{k}$, then $|\rho_{k}|\sim\gamma_{k}\sim 2\pi k/\ln k$ as $k\rightarrow \infty$.
Therefore for the zeros of the zeta function we get $\gamma_{k}\sim\,k/\ln k$. Although both asymptotic
regimes are quite different from the usual systems, it is certainly a step forward to explain why the zeta regularization works for the sequence of zeta zeros and fails for the sequence of prime numbers and to sequences whose asymptotic distributions are not ``too far away" from the asymptotic distribution of the prime numbers.

\section{Acknowlegements}

This paper was supported by Conselho Nacional de Desenvolvimento
Cientifico e Tecnol{\'o}gico do Brazil (CNPq).


\begin{thebibliography}{99}

%
\bibitem{riem} B. Riemann, {\em{"\"Uber die Anzahlde Primzahler unter einer gegebenen Gr\"osse"}},
Monatsberichte der Berliner Akademie, 671 (1859).
%
\bibitem{borwein} {\it The Riemann Hypothesis}, edited by P. Borwein, S. Choi, B. Rooney and A. Weirathmueller (Springer Science+Bussines Media LLC, New York, 2008).
%
\bibitem{rev}  D. Schumayer and D. A. W. Hutchinson, Rev. Mod. Phys. {\bf 83}, 307 (2011).
%
\bibitem{gutzwiller1} M. G. Gutzwiller, J. Math. Phys. {\bf 11} 1791 (1970).
%
\bibitem{gutzwiller2} M. G. Gutzwiller, J. Math. Phys. {\bf 12} 343 (1971).
%
\bibitem{bo} O. Bohigas, M. J. Giannoni and C. Schmit, Phys. Rev. Lett. {\bf 52}, 1 (1984).
%
\bibitem{bohigas} O. Bohigas and M. J. Gianonni, in {\em{Mathematical and Computational Methods in Nuclear
Physics}}, edited H. Araki, J. Ehlers, K. Hepp, R. Kippenhahn, H. Weidenmuller and J. Zittart (Springer Verlag, Berlin, 1984).
%
\bibitem{berry} M. V. Berry, in {\em{Chaotic Behavior of Deterministic Systems}}, edited by G. Iooss, R. Helleman and R. Stora (North Holland, Amsterdam, 1981), Le Houches Section XXXVI.
%
\bibitem{mehta} M. L. Mehta, {\em{Random Matrices}} (Elsevier, Amsterdam, 2004).
%
\bibitem{for} P. J. Forrester, N. C. Snaith and J. J. M. Verbaarschot, J. Phys. A {\bf 36}, R1 (2003).
%
\bibitem{for2} {\em{Application of Random Matrices in Physics}}, edited by E. Br\'ezin, V. Kazakov, D. Serban,
P. Wiegmann and A. Zabrodin (Springer, Netherlands, 2006).
%
\bibitem{mo} H. L. Montgomery, Proc. Symp. Pure Math. {\bf 24}, 181 (1973).
%
\bibitem{mo2} A. M. Odlyzko, Math. Comp. {\bf 48}, 273 (1987).
%
\bibitem{bu} E. B. Bogolmony and J. P. Keating, Nonlinearity {\bf 8}, 1115 (1995).
%
\bibitem{buu} E. B. Bogolmony and J. P. Keating, Nonlinearity {\bf 9}, 911 (1996).
%
\bibitem{siam} M. V. Berry and J. P. Keating, Siam Rev. {\bf 41}, 236 (1999).
%
\bibitem{bu2} J. P. Keating and N. C. Snaith, Comm. Math. Phys. {\bf 214}, 57 (2000).
%
\bibitem{bourgade} P. Bourgade and J. P. Keating, S\'eminaire Poincar\'e {\bf XIV}, 115 (2010).
%
\bibitem{ishimaru} A. Ishimaru, {\em{Wave Propagation and Scattering in Random Media}} (Academic, New York, 1978).
%
\bibitem{ping} {\em{Scattering and Localization of Classical Waves in Random Media}}, edited by P. Sheng (World Scientific,
Singapore, 1990).
%
%
\bibitem{krein1} G. Krein, G. Menezes and N. F. Svaiter, Phys. Rev. Lett. {\bf 105}, 131301 (2010).
%
\bibitem{new} E. Arias, E. Goulart, G. Krein, G. Menezes and N. F. Svaiter, Phys. Rev. D {\bf 83}, 125022 (2011).
%
\bibitem{annals} L. H. Ford, V. A. De Lorenci, G. Menezes and N. F. Svaiter, Ann. of Phys. {\bf 329}, 80 (2013).
%
\bibitem{efetov1} K. B. Efetov, Sov. Phys. JETP {\bf 55}, 514 (1982).
%
\bibitem{efetov2} K. B. Efetov, Sov. Phys. JETP {\bf 83}, 833 (1982).
%
\bibitem{le} P. Leboeuf, A. G. Monastra and O. Bohigas, Regular and Chaotic Dynamics {\bf 2}, 205 (2001).
%
\bibitem{le2} P. Leboeuf, A. G. Monastra, Nucl. Phys. A {\bf 724}, 69 (2003).
%
\bibitem{stn} B. Julia, in {\em{Number Theory and Physics}}, edited by J.-M. Luck, D. Moussa and M. Waldschmidt (Springer Verlag, Berlin, 1990).
%
\bibitem{sss} D. Spector, Comm. Math. Phys. {\bf 127}, 239 (1990).
%
\bibitem{bakas} I. Bakas and M. J. Bowick, J. Math. Phys. {\bf 32}, 1881 (1991).
%
\bibitem{julia} B. L. Julia, Physica A {\bf 203}, 425 (1994).
%
\bibitem{sp} D. Spector, Comm. Math. Phys. {\bf 177}, 13 (1996).
%
\bibitem{spector} D. Spector, J. Math. Phys. {\bf 39}, 1919 (1998).
%
\bibitem{se} R. T. Seeley, Amer. Math. Soc. Proc. Symp. Pure Math. {\bf 10}, 288 (1967).
%
\bibitem{ray} D. B. Ray and I. M. Singer, Advances in Math. {\bf 7}, 145 (1971).
%
\bibitem{dowker} J. S. Dowker and R. Crithley, Phys. Rev. D {\bf 13}, 3224 (1976).
%
\bibitem{ha} S. W. Hawking, Comm. Math. Phys. {\bf 55}, 133 (1977).
%
\bibitem{vo} A. Voros, Comm. Math. Phys. {\bf 110}, 439 (1987).
%
\bibitem{vor} A. Voros, Adv. Stud. Pure Math. {\bf 21}, 327 (1992).
%
\bibitem{quine} J. R. Quine, S. H. Heydary and R. Y. Song, Trans. Amer. Math. Soc. {\bf 338}, 213 (1993).
%
\bibitem{dune} G. V. Dune, J. Phys. A {\bf 41}, 1 (2008).
%
\bibitem{ga} G. Menezes and N. F. Svaiter, {\em{``Quantum Field Theories and Prime Numbers Spectrum"}} [arXiv:1211.5198 [math-ph]].
%
\bibitem{andrade} J. C. Andrade, Int. Jour. Mod. Phys. A {\bf 28}, 1350072 (2013).
%
\bibitem{hadamard} J. Hadamard, Bull. Soc. Math. France {\bf 24}, 119 (1986).
%
\bibitem{c1} C. de la Valle-Poussin, Ann. Soc. Sci. Bruxelles {\bf 202}, 183 (1896).
%
\bibitem{c2} C. de la Valle-Poussin, Ann. Soc. Sci. Bruxelles {\bf 202}, 281 (1896).
%
%
%
\bibitem{ingham} A. E. Ingham, {\em{The Distribution of Prime Numbers}} (Cambridge University Press, Cambridge, 1990).
%
\bibitem{titchmarsh} E. C. Titchmarsh, {\em{The Zeta-function of Riemann}} (Stechert-Hafner Service Agency, New York, 1964).
%
\bibitem{pat} S. J. Patterson, {\em{An introduction to the theory of the Riemann Zeta-Function}} (Cambridge University Press, Cambridge, 1988).
%
\bibitem{france} G. Tenenbaum and M. M. France, {\em{``The Prime Numbers and Their Distribution"}}, American Mathematical Society (2000).
%
\bibitem{mussardo} G. Mussardo, {\em{``The Quantum Mechanical Potential for the Prime Numbers"}} [arXiv: 9712010 [cond-math]].
%
\bibitem{rosu} H. Rosu, Mod. Phys. Lett. {\bf 18}, 1205 (2003).
%
\bibitem{zyl} J. Sakhr, R. K. Bhaduri and B. P. van Zyl, Phys. Rev. E {\bf 68}, 026206 (2003).
%
\bibitem{pre} D. Schumayer, B. P. van Zyl, D. A. W. Hutchinson, Phys. Rev. E {\bf 78}, 056215 (2008).
%
\bibitem{lan} Von E. Landau and A. Walfisz, {\em{"\"Uber die Nichtfortsetzbarkeit einiger
durch Dirichletsche Reihen definierter Funktionen"}}, Rendiconti del Circolo Matematico di Palermo {\bf 44}, 82 (1919).
%
\bibitem{carl} C. E. Fr\"oberg, BIT, {\bf 8}, 187 (1968).
%
\bibitem{hardy} G. H. Hardy and E. M. Wright, {\em{An Introduction to the Theory of Numbers}} (Oxford Clarenton Press, London, 1956).
%
%
\bibitem{mp} S. Minakshisundaram and M. Pleijel, Can. J. Math. {\bf 1}, 242 (1949).
%
\bibitem{gui} A. P. Guinand, Proc. Lond. Math. Soc. {\bf 50}, 107 (1945).
%
\bibitem{delsarte} J. Delsarte, Jour. Anal. Math. {\bf 17}, 419 (1966).
%
%
%
\bibitem{ivic} A. Ivic, Bull. CXXI l'Acad. Scien. Arts, {\bf 26}, 39 (2003).
%
\bibitem{voros1} A. Voros, Ann. Inst. Fourier, Grenoble {\bf 53}, 665 (2003); erratum {\bf 54}, 1139 (2004).
%
\bibitem{superzeta} A. Voros, {\em{Zeta Functions over Zeros of Zeta Functions}} (Springer-Verlag, Heidelberg, 2010).
%
\bibitem{cou} R. Courant and D. Hilbert, {\em{Methods of Mathematical Physics}} (Interscience Publishers, Inc., New York, 1953).
%
\bibitem{can} D. Deutsch and P. Candelas, Phys. Rev. D {\bf 20}, 3063 (1979).
%

\end{thebibliography}
\end{document}